\tikzset{snake it/.style={decorate, decoration=snake}}
\newtheorem{theorem}{Theorem}[section]
\newtheorem{proposition}[theorem]{Proposition}
\newtheorem{lemma}[theorem]{Lemma}
\theoremstyle{definition}
\newtheorem{definition}[theorem]{Definition}
\newtheorem{example}[theorem]{Example}
\newtheorem{remark}[theorem]{Remark}
\newcommand{\abs}[1]{|#1|}
\newcommand{\car}[1]{\#\left(#1\right)}
\newcommand{\lam}{\lambda}
\newcommand{\maxsc}{\mathsf{maxsc}}
\renewcommand{\sc}{\mathsf{sc}}
\DeclareMathOperator{\ran}{ran}
\DeclareMathOperator{\dom}{dom}
\begin{document}
	\title{The number of languages with \\ maximum state complexity}
	\author{Bj\o rn Kjos-Hanssen\thanks{This work was partially supported by grants from the
		Simons Foundation (\#315188 and \#704836 to Bj\o rn Kjos-Hanssen) and
		Decision Research Corporation (University of Hawai\textquoteleft i Foundation Account \#129-4770-4). We are grateful to the gracious referee who persisted through seven revisions of the paper.}
	\\ Lei Liu}
	\maketitle

\begin{abstract}
C\^{a}mpeanu and Ho (2004) determined the maximum finite state complexity of finite languages,
building on work of Champarnaud and Pin (1989).
They stated that it is very difficult to determine the number of maximum-complexity languages.
Here we give a formula for this number. We also generalize their work from languages to functions on finite sets.
\end{abstract}

	\section{Introduction}
		At some point in the 1980s, Howard Straubing posed a problem that was subsequently solved in Champarnaud and Pin (1989)~\cite{CHAMPARNAUD198991}.
		They showed that the minimal incomplete deterministic finite automaton of a language $L\subseteq\Sigma^n$, where $\Sigma=\{0,1\}$, has at most
		\[
			\sum_{i=0}^n \min(2^i, 2^{2^{n-i}}-1)
		\]
		states. Moreover, for each $n$ there exists an $L$ attaining this bound.
		C\^{a}mpeanu and Ho (2004)~\cite{MR2167778} showed more generally
		that the tight upper bound for $\Sigma$ of cardinality $k$ and for complete automata is
		\[
			\frac{k^r-1}{k-1} + \sum_{j=0}^{n-r}(2^{k^j}-1) + 1
		\]
		where $r=\min\{m:k^m\ge 2^{k^{n-m}}-1\}$.
		(In these results, requiring totality of the transition function adds 1 to the state count.)
		C\^{a}mpeanu and Ho's result can be viewed as concerning functions $f:[k]^n\to [2]$ where $[k]=\{0,\dots,k-1\}$ is a set of cardinality $k$.
		We generalize their result to arbitrary functions $f:[k]^n\to [c]$ where $c$ is a positive integer.
		Equivalently, we consider functions $f:[k]^*\to [c]$, where $\{x:f(x)>0\}\subseteq [k]^n$ for some $n$,
		and where automata have $c-1$ accept states corresponding to nonzero values of $f$.

		The function $+$ on $\mathbb Z/5\mathbb Z$ may seem rather complicated as functions on that set go.
		On the other hand, $f(x,y,z)=x+y+z$ mod 5 is less so, in that we can decompose it as $(x+y)+z$, so that after seeing $x$ and $y$,
		we need not remember the pair $(x,y)$, but only their sum. Out of the $5^{5^3}$ ternary functions on a 5-element set,
		at most $5^{2\cdot 5^2}$ can be decomposed as $(x*_1y)*_2 z$ for some binary functions $*_1$, $*_2$.
		This idea of the state complexity of functions has been applied in bioinformatics \cite{PolErs19}.
		In \Cref{sec:main} we make precise a sense in which such functions are not the most complex ternary functions.
		We do this by extending a result of C\^ampeanu and Ho~\cite{MR2167778} to functions taking values in a set of size larger than two.
		Rising to an implicit challenge posed by C\^ampeanu and Ho, we give a formula for the number of maximally complex languages.

		The structure of the paper is as follows. In \Cref{sec:main} we obtain an upper bound in \Cref{thm:ub} for the complexity of a function $f:[b]^n\to [c]$, and
		a matching lower bound in \Cref{thm:lb}.
		In \Cref{sec:number} we obtain the number of maximal complexity functions in \Cref{thm:jan16-2021}.
		Then we look at asymptotics in \Cref{sub:sur}, culminating in \Cref{thm:nmcf}.

	\section{Complexity of languages and operations}\label{sec:main}
		Let $\lam$ denote the empty word.
		Let the cardinality of a finite set $A$ be denoted by $\#(A)$, and the length of a finite word $w$ by $\abs{w}$.
		We define a function $\mathbb{I}_A:B\to A\cup\{0\}$ for any sets $A\subseteq B$ with $0\not\in A$ by
		\[
			\mathbb{I}_A(x)=\begin{cases}x &\text{if }x\in A,\\ 0 &\text{if }x\not\in A.\end{cases}
		\]
		\begin{definition}\label{reviewer2}\label{gener}
			Let $b$ and $c$ be positive integers and let $\Sigma$ be an alphabet with $\car{\Sigma}=b$.
			An \emph{incomplete deterministic finite automaton} (IDFA)
			$M$ is a 5-tuple $(Q, \Sigma, \delta, q_0, F)$, where
			$Q$ is a finite set of states,
			$\Sigma$ is a finite alphabet,
			$q_0 \in Q$ is the start state,
			$F\subseteq Q$ is the set of accept states,
			and
			$\delta : D \to Q$, where $D\subseteq Q\times\Sigma$, is the transition function.

			W also require $F=\{1,\dots,c-1\}=[c]\setminus\{0\}$, where $c-1=\car{F}$.
			If $D=Q\times\Sigma$, i.e., $\delta$ is total, then $M$ is moreover a \emph{deterministic finite automaton} (DFA).

			We define $\overline{\delta}:D\to Q$, where $D\subseteq Q\times\Sigma^*$,
			by $\overline{\delta}(q,\lam)=q$, and recursively $\overline{\delta}(q,x u) = \delta(\overline{\delta}(q,x),u)$ for $x\in\Sigma^*$ and $u\in\Sigma$.
			We say that states $q_1,q_2$ are \emph{$M$-distinguishable} if there is a $z$ with $\overline\delta(q_1,z)\ne \overline\delta(q_2,z)$ and 
			$\{\overline\delta(q_1,z),\overline\delta(q_2,z)\}\cap F\ne\emptyset$.

			The \emph{function accepted by $M$} is the function $f:\Sigma^*\to [c]$ defined by
			\[
			f(x) = \mathbb{I}_F(\overline{\delta}(q_0,x)),\quad \text{if $\overline{\delta}(q_0,x)$ is defined},
			\]
			and $f(x)=0$ otherwise.
			Thus $f(x)=0$ if $\overline{\delta}(q_0,x)\not\in F$, and $f(x)=\overline{\delta}(q_0,x)$ if $\overline{\delta}(q_0,x)\in F$.
			The \emph{language accepted by} $M$ is
			\[
				L(M)=\{x\in\Sigma^*:f(x)>0\}=\{x\in\Sigma^*:\overline\delta(q_0,x)\in F\}.
			\]
		\end{definition}
		Note that in the case $c=2$, accepting a language is equivalent to accepting its indicator (characteristic) function.
		\begin{definition}[state complexity]\label{df:sc}
			We call an IDFA $M=(Q,\Sigma,\delta,q_0,F)$ \emph{minimal (for $L(M)$)} if $\car{Q}\le\car{Q'}$
			for all IDFAs $M'=(Q',\Sigma,\delta',q_0',F')$ with $L(M')=L(M)$.
			Moreover, $M$ is \emph{minimal for $f$} if $M$ accepts $f$ and $\car{Q}\le\car{Q'}$ for all $M'$ accepting $f$.
			In this case we define the state complexity $\sc(f)$ by $\sc(f)=\car{Q}$.
		\end{definition}

		Champarnaud and Pin~\cite{CHAMPARNAUD198991} obtained the following result.
		\begin{theorem}[{\cite[Theorem 4]{CHAMPARNAUD198991}}]\label{thm:CP}
			A minimal IDFA for a language $L\subseteq\{0,1\}^n$ has at most
			\[
				\sum_{i=0}^n \min(2^i, 2^{2^{n-i}}-1)
			\]
			states, and for each $n$ there exists a language $L$ attaining this bound.
		\end{theorem}
		\Cref{thm:CP} was generalized by C\^ampeanu and Ho~\cite{MR2167778}:
		\begin{theorem}[{\cite[Corollary 10]{MR2167778}}]
			Let $k\ge 1$ and $l\ge 0$ be integers, and let $M$ be a minimal DFA for a language $L\subseteq [k]^l$.
			Let $Q$ be the set of states of $M$. Then we have:
			\begin{enumerate}[(i)]
				\item\label{i} $\car{Q}\le \frac{k^r-1}{k-1}+\sum_{j=0}^{l-r} (2^{k^j}-1)+1$, where $r=\min\{m\mid k^m\ge 2^{k^{l-m}}-1\}$.
				\item There is an $M$ such that the upper bound given by \Cref{i} is attained.
			\end{enumerate}
		\end{theorem}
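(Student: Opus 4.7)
The plan is to stratify the states of the minimal DFA by ``level''. Define $Q_i = \{\delta^*(q_0,w) : w \in [k]^i\}$; since $L \subseteq [k]^l$, every state of $M$ lies in some $Q_i$ with $0\le i\le l$ or is the dead sink. Immediately $|Q_i|\le k^i$. By Myhill--Nerode, each $q\in Q_i$ is determined by its tail $\{v\in[k]^{l-i}:\delta^*(q,v)\in F\}\subseteq[k]^{l-i}$, and since all states with empty tail collapse to one dead state, $|Q_i|\le 2^{k^{l-i}}-1$. Thus $|Q|\le 1+\sum_{i=0}^l\min(k^i,2^{k^{l-i}}-1)$. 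As $k^i$ increases and $2^{k^{l-i}}-1$ decreases in $i$, the minimum switches from the former to the latter precisely at $i=r$; splitting the sum there and using $\sum_{i=0}^{r-1}k^i=(k^r-1)/(k-1)$ along with the re-indexing $j=l-i$ in the tail part gives (i).

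For (ii), I would construct $L$ by prescribing its behavior at level $r$. Put $N=2^{k^{l-r}}-1$ and choose a map $f\colon[k]^r\to\{0,1\}^{[k]^{l-r}}$ such that (a) every nonempty function lies in the image of $f$, (b) the sibling tuples $(f(ua))_{a\in[k]}$ are pairwise distinct across $u\in[k]^{r-1}$, and (c) no $u\in[k]^{r-1}$ has all $f(ua)$ equal to the zero function. Then set $L=\{wv\in[k]^l:w\in[k]^r,\ f(w)(v)=1\}$. Property (a) realizes all $N$ nonempty tails at level $r$, and for $i>r$ every nonempty function $[k]^{l-i}\to\{0,1\}$ arises as an $x$-slice $v\mapsto f(w)(xv)$ of some $f(w)\in\mathrm{im}(f)$, so all $2^{k^{l-i}}-1$ nonempty tails appear at level $i$. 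For $i<r$, equality of tails at $u,u'\in[k]^i$ is equivalent to $f(ux)=f(u'x)$ for every $x\in[k]^{r-i}$; an induction from $i=r-1$ downward, based on (b), rules this out, while (c) propagates up the tree and guarantees none of these tails is empty. Hence at each level $i<r$ exactly $k^i$ distinct non-dead states appear.

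The main obstacle is exhibiting an $f$ satisfying (a), (b), (c) simultaneously. This reduces to a counting/greedy exercise: the number of legal sibling tuples (nonzero $k$-tuples in an alphabet of $N+1$ symbols) equals $(N+1)^k-1=2^{k^{l-r+1}}-1$, which strictly exceeds $k^{r-1}$ because $r-1$ lies below the threshold defining $r$; meanwhile $k^r\ge N$ places every nonempty function in reach. A direct construction --- first selecting one preimage for each of the $N$ nonempty functions so that the $k^{r-1}$ parent-tuples formed are distinct and nonzero, then filling any remaining level-$r$ slots while preserving distinctness of parent-tuples --- produces the required $f$. Once $f$ is in hand, the Myhill--Nerode count above shows that the minimal DFA of the resulting $L$ attains the upper bound of (i).
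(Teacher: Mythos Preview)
Your argument is correct and runs along the same lines as the paper's, which obtains this statement as the $c=2$, $b=k$ case of its Theorem~\ref{general}. The upper bound via level-stratification and Myhill--Nerode tails is exactly what underlies the paper's Theorem~\ref{frak}, and for attainment both you and the paper build the language from a crossover assignment at the critical level: your conditions (a)--(c) are precisely what the paper's explicit packaging of the nonzero $(l-r)$-ary functions into $k$-blocks $g_j(a,\vec x)=f_{jk+a}(\vec x)$ is designed to deliver. Your downward induction from (b) to get $k^i$ distinct non-dead states at levels $i<r$ is a point the paper leaves implicit (it just asks that the $h_\sigma$ form a \emph{set} of size $k^{r-1}$), so in that respect your write-up is actually more careful.

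One point to tighten: the two-step greedy description (``first selecting one preimage for each of the $N$ nonempty functions so that the $k^{r-1}$ parent-tuples formed are distinct and nonzero, then filling any remaining level-$r$ slots while preserving distinctness'') does not work as phrased when $N<k^{r-1}$, which does occur --- e.g.\ $k=2$, $l=4$, $r=3$ gives $N=3<4=k^{r-1}$. After placing only $N$ preimages, at least $k^{r-1}-N$ parents have no filled child and hence identical (all-zero) partial tuples; the second step must \emph{create} distinctness and nonzeroness for those parents, not merely preserve it. The paper sidesteps this by working parent-first: list the $N$ nonzero targets, pack them $k$ at a time into $\lceil N/k\rceil\le k^{r-1}$ parent-functions $g_j$, and then extend this family to $k^{r-1}$ distinct nonzero parent-functions, which is possible since $k^{r-1}<2^{k^{l-r+1}}-1$ by the minimality of $r$. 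Recasting your greedy step in this parent-first order closes the gap.
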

		Both of these results involve an upper bound which can be viewed as a special case of \Cref{thm:ub} below.

		We now develop a function version of the Myhill--Nerode theorem, by following and generalizing the presentation in Shallit's textbook~\cite{Shallit:2008:SCF:1434864}.
		\begin{definition}
			Let $\Sigma$ be an alphabet and let $c\in\mathbb N$.
			A relation $R\subseteq \Sigma^*\times\Sigma^*$ is \emph{right invariant} if for all $x,y,z\in\Sigma^*$, we have
			$x R y\implies xz R yz$.
			An equivalence relation $E$ on $\Sigma^*$ is a \emph{congruence relation} for $f:\Sigma^*\to [c]$ if for all $x,y\in\Sigma^*$,
			\(
				xEy\implies f(x)=f(y).
			\)
			For an equivalence relation $E$, the \emph{index} of $E$, denoted $\mathrm{index}(E)$, is the number of equivalence classes of $E$.
			An equivalence relation has \emph{finite index} if $\mathrm{index}(E)<\infty$.
			The \emph{Myhill--Nerode equivalence relation} for $f:\Sigma^*\to [c]$ is the relation $R_f$ defined by
			\[
				x R_f y\iff \text{for all }z\in\Sigma^*, f(xz)=f(yz).
			\]
			Let $[x]_{f}$ denote the $R_{f}$-equivalence class of $x$.
		\end{definition}
		\begin{lemma}
			Let $f:\Sigma^*\to [c]$.
			\begin{enumerate}
				\item\label{R_f i} $R_f$ is an equivalence relation.
				\item\label{R_f ii} $R_f$ right invariant.
			\end{enumerate}
		\end{lemma}
		\begin{proof}
			\Cref{R_f i} is a standard observation. For \Cref{R_f ii}:
			If we extend $xz$ and $yz$ by the same string $w$,
			then we have also extended $x$ and $y$ by the same string $zw$, and hence $f(xzw)=f(yzw)$.
		\end{proof}
		\begin{lemma}\label{lem:shallit393}
			Let $f:\Sigma^*\to [c]$.
			Suppose that $E$ is a right invariant equivalence relation on $\Sigma^*$ which is a congruence relation for $f$.
			Then $E$ is a refinement of $R_{f}$.
		\end{lemma}
		\begin{proof}
			We must show that $xEy\implies x R_{f} y$.
			Suppose $xEy$ and let $z\in\Sigma^*$. Since $E$ is right invariant, $xz E yz$.
			Since $E$ is a congruence relation for $f$, $f(xz)=f(yz)$. Thus we have shown that $xR_{f} y$.
		\end{proof}
		Every function is onto its range, and when the range is a finite subset of $\mathbb N$, when studying complexity under our definitions we
		assume the range is an initial segment of $\mathbb N$. Thus we restrict attention to onto functions in \Cref{thm:myhill}.
		\begin{theorem}\label{thm:myhill}
			Let $f:\Sigma^*\to [c]$ be onto.
			The following are equivalent:
			\begin{enumerate}
				\item $f$ is accepted by some IDFA.
				\item There exists a right invariant congruence relation for $f$ of finite index.
				\item $R_f$ has finite index.
				\item $f$ is accepted by some DFA.
			\end{enumerate}
		\end{theorem}
		\begin{proof}
			We prove this in the usual round-robin fashion.
				\begin{description}
				\item[(1) $\implies$ (2):] Let $M$ be an IDFA that accepts $f$. Define a relation $R_M$ by $xR_M y$ iff $\overline\delta(q_0,x)=\overline\delta(q_0,y)$,
				or both are undefined.
				Since $M$ has finitely many states, $R_M$ has finite index. From the definition of $\overline\delta$ it follows that $R_M$ is right invariant.
				Finally, since $f(x)=\mathbb{I}_F(\overline\delta(q_0,x))$ if defined, and 0 otherwise, $f(x)$ is determined by $\overline\delta(q_0,x)$.
				Thus $R_M$ is a congruence relation for $f$.
				\item[(2) $\implies$ (3):] Let $R$ be a right invariant congruence relation for $f$, of finite index.
				By \Cref{lem:shallit393}, $R$ is a refinement of $R_f$.
				Then $\mathrm{index}(R_f)\le \mathrm{index}(R)<\infty$, as desired.
				\item[(3) $\implies$ (4):] Suppose $R_f$ has finite index. Define
				$Q'=\{[x]_f:x\in\Sigma^*\}$,
				$q_0'=[\lam]_f$,
				$F'=\{[x]_f: f(x)>0\}$, and
				$\delta'([x]_f,a)=[xa]_f$.
				Then $\#(Q')=\mathrm{index}(R_f)<\infty$.
				Since $R_f$ is right invariant, $\delta'$ is well-defined.
				Thus $M'=(Q',\Sigma,\delta',q_0',F')$ is an IDFA.\@ 
				We must show that $f(x)=\mathbb{I}_{F'}(\overline{\delta'}(q_0',x))$ for each $x$.
				Case 1: $f(x)=0$. Since $R_f$ is a congruence relation for $f$, $[x]_f\not\in F'$ and hence
				$\overline{\delta'}(q_0',x)=[\lambda x]_f=[x]_f\not\in F'$ which means that $\mathbb{I}_{F'}(\overline{\delta'}(q_0',x))=0$.
				Case 2: $f(x)>0$. Then by definition $[x]_f\in F'$ and so $\overline{\delta'}(q_0',x)=[\lambda x]_f=[x]_f\in F'$ 
				which means that $\mathbb{I}_{F'}(\overline{\delta'}(q_0',x))=\overline{\delta'}(q_0',x)$.
				Finally, let $\pi:F'\to [c]\setminus\{0\}$ be a bijection and formally replace each $q\in F'$ by $\pi(q)\in [c]$.
				\item[(4) $\implies$ (1):] This is immediate since each DFA is an IDFA.
			\end{description}
		\end{proof}

		\begin{theorem}\label{thm:shallit3.10.1}\label{thm:blue}
			Let $f:\Sigma^*\to [c]$.
			Let $M$ be an IDFA accepting $f$. Let $q$ be the number of states of $M$.
			Suppose that all states of $M$ are reachable and that any two states of $M$ are $M$-distinguishable.
			Then $\sc(f)=q$.
		\end{theorem}
		\begin{proof}			
			Let $M'$ be the automaton in \Cref{thm:myhill} for $f$ and let $Q'$ be its set of states. We claim that $M'$ is minimal.
			Note that $\#(Q')=\mathrm{index}(R_f)$.
			Let $N$ be any automaton accepting $f$, let $Q$ be its set of states and $\delta$ its transition function.
			Since $N$ accepts $f$, for all $x,y,z$,
			if $f(xz)\ne f(yz)$ then $\overline\delta(q_0,x)\ne \overline\delta(q_0,y)$.
			Thus $[x]_f\mapsto \overline\delta(q_0,x)$ is injective, and we have established that $\mathrm{index}(R_f)\le\#(Q)$, and hence that $M'$ is minimal.

			Now let $M=(Q,\Sigma,\delta,q_0,F)$ be any IDFA accepting $f$ for which any two states are reachable and $M$-distinguishable.
			It suffices to show that $\car{Q}\le\car{Q'}$, and for this it suffices to
			give an injective map $\varphi:Q\to Q'$.
			For each $q\in Q$ we let
			\begin{equation}\label{well}
				\varphi(q) = [x]_f\quad\text{where $x$ is such that $\overline\delta(q_0,x)=q$.}
			\end{equation}
			Such an $x$ must exist, or else $q$ is not reachable.

			\noindent\emph{Claim:} $\varphi$ is well-defined by~\eqref{well}.
			\begin{proof}[Proof of claim] Suppose that $\overline\delta(q_0,y)=q$ and let us show $[x]_f=[y]_f$.
			Let $z\in\Sigma^*$.
			Since $M$ accepts $f$,
			\begin{itemize}
				\item for $i>0$,
				$f(xz)=i$ iff $\delta(q_0,xz)=i$ and
				$f(yz)=i$ iff $\delta(q_0,yz)=i$; and
				\item for $i=0$,
				$f(xz)=0$ iff $\delta(q_0,xz)$ is undefined or is not in $F$, and 
				$f(yz)=0$ iff $\delta(q_0,yz)$ is undefined or is not in $F$
			\end{itemize}
			We have
			\[
				\delta(q_0,xz)=\delta(\delta(q_0,x),z)=\delta(q,z)
				=\delta(\delta(q_0,y),z)=\delta(q_0,yz)
			\]
			in the sense that $\delta(q_0,xz)$ and $\delta(q_0,yz)$
			 are both definitionally equal to $\delta(q,z)$, which may or may not be defined or in $F$.
			So in all cases $f(xz)=f(yz)$. \end{proof}

			Finally, let us show that $\varphi$ is one-to-one. If $\varphi(q_1)=\varphi(q_2)$ then $[x_1]_f=[x_2]_f$
			where $\delta(q_0,x_i)=q_i$. We will show, using $M$-distinguishability, that $q_1=q_2$.

			Suppose $q_1\ne q_2$. Then there is some $z$ with
			\[
				\delta(q_0,x_1z)=\delta(q_1,z)\ne \delta(q_2,z)=\delta(q_0,x_2z)
			\]
			and
			\(
				\{\delta(q_0,x_1z), \delta(q_0,x_2z)\}\cap F\ne\emptyset.
			\)
			Hence since $M$ accepts $f$, $f(x_1z)\ne f(x_2 z)$, which contradicts $[x_1]_f=[x_2]_f$.
		\end{proof}

		We write $A^B$ for the set of all functions from $B$ to $A$.
		\begin{definition}
			Let $b$ and $c$ be positive integers and let ${[c]}^{{[b]}^n}$ be the set of $n$-ary functions $f:[b]^n\to [c]$.
			Let $\mathfrak C\subseteq {[c]}^{{[b]}^n}$.
			The \emph{Champarnaud--Pin family} of $\mathfrak C$ is the family of sets $\{\mathfrak C_k\}_{0\le k\le n}$, where
			$\mathfrak C_k\subseteq {[c]}^{{[b]}^{n-k}}$, $0\le k\le n$,
			given by
			\[
				\mathfrak C_k = \{g\in [c]^{[b]^{n-k}} : \exists f\in\mathfrak C,\, w\in [b]^k\quad \forall x\quad g(x)=f(wx)\}.
			\]
			In terms of the function $\tau_w(x)=wx$, this can be restated as
			\[
				\mathfrak C_k = \{f\circ\tau_w \in [c]^{[b]^{n-k}} : f\in\mathfrak C,\, w\in [b]^k\}.
			\]
		\end{definition}
		So $\mathfrak C_0=\mathfrak C$, $\mathfrak C_1$ is obtained from $\mathfrak C_0$ by plugging in constants for the first input, and so forth.
		We write $\mathfrak C_n^-=\{f\in\mathfrak C_n: f(x)>0\text{ for some $x$}\}$.
		Note that $\car{\mathfrak C_n^-}\ge\car{\mathfrak C_n}-1$.

		\begin{definition}\label{dec17-21}
			Let us say that an IDFA $M$ accepts $f:[b]^n\to [c]$ if $M$ accepts the function $f^+:\Sigma^*\to [c]$ with $f^+(x)=f(x)$ if $x\in [b]^n$, and $f^+(x)=0$ otherwise.
			The \emph{state complexity} of $f:[b]^n\to [c]$ is the minimum number of states of an IDFA accepting $f:[b]^n\to [c]$, and is denoted $\sc(f)$.
		\end{definition}
		Note that \Cref{dec17-21} says that $\sc(f)=\sc(f^+)$.
		For $c>b=2$, $\sc(f)$ corresponds to automatic complexity of equivalence relations on binary strings as studied in~\cite{MR3712310}.
		The case $b=c$ is that of $n$-ary operations on a given finite set, which is of interest in universal algebra.
		
		We also define $\maxsc_{b,c,n}=\sum_{i=0}^n \min(b^i,c^{b^{n-i}}-1)$, which shall turn out to be the maximum of $\sc(f)$ over all $f$.

		\begin{definition}\label{def:crossover}
		We define a \emph{crossover function} $\chi(b,c,n)=\max\{i\in [0,n]\mid b^i\le c^{b^{n-i}}-1\}$.
		\end{definition}

		\begin{definition}\label{3pm-121721}
			Let $f\in [c]^{[b]^n}$ and $0\le j\le n$.
			We define an IDFA $M_{f,j}$.
			Its set of states is the disjoint union
			\[
				Q = \{q_w: w\in [b]^i, i\le j\} \cup
				\{r_{g}: g\in\mathfrak C^-_i, i>j\}.
			\]
			where all $q_w$, $r_g$ are distinct.
			The transition function $\delta$ of $M_{f,j}$ is given by
			\begin{eqnarray*}
				\delta(q_w,a) &=& \begin{cases}
				q_{wa}				& \abs{w}<j, a\in [b], \\
				r_{f\circ\tau_{wa}}	& \abs{w}=j, \text{$f\circ\tau_{wa}\not\equiv 0$},
				\end{cases}\\
				 \delta(r_g,a)	&=&r_{g\circ\tau_a},	\quad g\circ\tau_a\not\equiv 0.
			\end{eqnarray*}
		\end{definition}
		\begin{theorem}\label{thm:ub}
			Let $b$ and $c$ be positive integers.
			Let $f\in [c]^{[b]^n}$.
			Then
			\(
				\sc(f)\le \maxsc_{b,c,n}.
			\)
		\end{theorem}
		\begin{proof}
			Let $f\in\mathfrak C$. We must show that there is an
			IDFA $M_f$ accepting $f$ with at most the given number of states.
			Let $i_0=\chi(b,c,n)$ and let $M_f=M_{f,i_0}$ (\Cref{3pm-121721}).
			Then
			$\min(b^i,\car{\mathfrak C^-_{i}})=b^i$ for $i\le i_0$ and $\min(b^i,\car{\mathfrak C^-_{i}})=\car{\mathfrak C^-_i}$ for $i>i_0$.
			Note that for each $q\in Q$ there is an integer $i(q)$ such that $i(q)\le i_0\implies q=q_w$ for some $w$ and $i(q)>i_0\implies q=r_g$ for some $g$.
			The transition function $\delta$ is given by \Cref{3pm-121721} and also described in \Cref{fig:define-delta}.
			Note that if $b^n + 1 < c$, we may not have $i_0 \le n$,
			but this is ruled out because then no $f:[b]^n\to [c]$ can be onto (\Cref{def:crossover}).
			(We may assume that $f$ is onto, since otherwise a smaller IDFA can be found.)

			\begin{figure}
			\begin{tabular}{l  l  l  l}
				\toprule
			Cases	& $b^i <c^{b^{n-i}}-1$	& \makecell[l]{$b^i <c^{b^{n-i}}-1$ and \\ $b^{i+1}\ge c^{b^{n-(i+1)}}-1$} & $b^i\ge c^{b^{n-i}}-1$\\
			\midrule
			$i < n-1$ 	& 	\makecell[l]{append letter:\\ $q=q_w$,\\ $\delta(q_w,a)= q_{wa}$}
			& \makecell[l]{plug in vector:\\ $q=q_w$,\\ $\delta(q_w,a)= r_{f\circ\tau_{wa}}$}
			& \makecell[l]{plug in letter:\\ $q=r_g$,\\ $\delta(r_g,a)= r_{g\circ\tau_a}$}\\
			\bottomrule
			\end{tabular}	
			\caption{Defining $\delta(q,a)$ in terms of $i=i(q)$ for \Cref{thm:ub}.}\label{fig:define-delta}
			\end{figure}
				
			Since $f\in\mathfrak C$, we have
			\[
				\car{\{f\circ\tau_w:w\in [b]^j\}}\le \car{\{h\circ\tau_w: w\in [b]^j, h\in\mathfrak C\}}
			\]
			although this need not be strict
			(for instance, when $j=n$, we are comparing the range of $f$ to the union of ranges of $h$, $h\in\mathfrak C$, which may both equal $[c]$). 
			By construction, $M_f$ accepts $f$; see also \Cref{exa:1}, \Cref{exa:2}, and \Cref{exa:3}.
			\end{proof}
			\begin{example}\label{exa:1}
				The following example shows the case $b=c=2$ and $n=3$, with $f$ the majority function.
				It has
				$\chi(b,c,n)=1$:
				\[
					\xymatrix{
																& *+[Fo]{q_1}\ar[r]^{1}\ar[dr]_{0} 	& *+[Fo]{r_{\top_1}}\ar[dr]^{0,1}\\
					*+[Fo]{q_{\lam}}\ar[dr]_{0}\ar[ur]^{1} 	&									& *+[Fo]{r_{1_1}}\ar[r]_{1}		& *+[F]{r_{\top_0}} \\
																& *+[Fo]{q_0}\ar[ur]_{1}			&
					}
				\]
				The states $r_{g}$ for $g\in \mathfrak C^-_n\subseteq [c]\setminus\{0\}$ serve as our final states and are indicated by a rectangular box.
				Here $\top_k$ is the constant 1 function of $k$ variables, whereas $1_j$ is defined by $1_j(x)=1$ if $j=x$, 0 otherwise.
				There is no arrow labeled 0 between the states $q_0$ and $r_{1_0}$. This is because after seeing $x=y=0$ we already know the majority of $x,y,z$ is 0, so we ``reject by missing transition''.
			\end{example}
			\begin{example}\label{exa:2}
				A slightly larger example: the case $b=c=2$ and $n=4$, with $f$ the majority function.
				It has
				$\chi(b,c,n)=2$:
				\[
					\xymatrix{
																& *+[Fo]{q_1}\ar[r]^{1}\ar[dr]_{0} 	& *+[Fo]{q_{11}}\ar[r]^{0,1}	& *+[Fo]{r_{\top_1}}\ar[ddr]^{0,1}	&\\
																&									& *+[Fo]{q_{10}}\ar[ur]^1\ar@/_/[ddr]_<<<<0	&						&\\
					*+[Fo]{q_{\lam}}\ar[ddr]_{0}\ar[uur]^{1}	&									&					& 			& *+[F]{r_{\top_0}} \\
																&									& *+[Fo]{q_{01}}\ar@/_/[uuur]_1\ar[r]_-0	&*+[Fo]{r_{1_1}}\ar[ur]_1	&\\
																& *+[Fo]{q_0}\ar[ur]_{1}\ar[r]_{0}	& *+[Fo]{q_{00}}\ar[ur]_1	&
					}
				\]
				In this case, the upper bound is strict: $q_{01}$ and $q_{10}$ are equivalent.
				Thus a smaller automaton suffices:
				\[
					\xymatrix{
																& *+[Fo]{q_1}\ar[r]^{1}\ar[dr]_{0} 	& *+[Fo]{q_{11}}\ar[r]_{0,1}	& *+[Fo]{r_{\top_1}}\ar[dr]^{0,1}	&\\
					*+[Fo]{q_{\lam}}\ar[dr]_{0}\ar[ur]^{1}	&									& *+[Fo]{q_{01}}\ar[r]_0\ar[ur]_1	& *+[Fo]{r_{1_1}}\ar[r]_{1}			& *+[F]{r_{\top_0}} \\
																& *+[Fo]{q_0}\ar[ur]_{1}\ar[r]_{0}	& *+[Fo]{q_{00}}\ar[ur]_1	& 		&
					}
				\]
			\end{example}
			\begin{example}\label{exa:3}
				As an example for the case $c>2$, let $b=2$, $c=3$, $n=2$, and let $f(x,y)=x+y$. Then our automaton $M_f$ is:
				\[
					\xymatrix{
																& *+[Fo]{q_1}\ar[r]^{1}\ar[dr]_{0} 	& *+[F]{r_{2}}\\
					*+[Fo]{q_{\lam}}\ar[dr]_{0}\ar[ur]^{1} 	&									& *+[F]{r_{1}}\\
																& *+[Fo]{q_0}\ar[ur]_{1}			& 
					}
				\]
			\end{example}
		\Cref{thm:lb} is a generalization of C\^ampeanu and Ho's theorem.
		The construction is similar to that of~\cite[Figure 1 and Theorem 8]{MR2167778}.
		\begin{theorem}\label{thm:lb}
			Let $b, c\ge 2$ and $n\ge 1$ be integers.
			There exists a function $f:[b]^n\to [c]$ such that
			\(
				\sc(f)=\maxsc_{b,c,n}.
			\)
		\end{theorem}
		\begin{proof}
			Let $\mathfrak C=[c]^{[b]^n}$
			To define $f\in\mathfrak C$, we first note that it suffices to fix an $i$ with $0\le i\le n$ and define $f\circ \tau_w$ for each $w\in [b]^{i}$.
			To that end, we fix $i_0=\chi(b,c,n)$.
			Since
			\[
				\car{[b]^{i_0}} = b^{i_0}\ge c^{b^{n-{i_0}}}-1 = \car{\mathfrak C_{i_0}^-},
			\]
			there exists a surjective function $\phi:[b]^{i_0}\to\mathfrak C_{i_0}^-$.
			Define $f$ by $f\circ \tau_w = \phi(w)$ for each $w\in [b]^{i_0}$.
			We claim that $f$ attains the bound, i.e., there is no smaller automaton than that given in \Cref{thm:ub}.
			By \Cref{thm:shallit3.10.1}, an IDFA to accept $f$ is minimal if all states are reachable (from the start state) and any two states are $M$-distinguishable.

			Thus, it remains to show that the states for $f$ as given in the proof of \Cref{thm:ub} are reachable and $M$-distinguishable.
			
			By choice of $i_0$ it is easy to see that each state is reachable. For an example of what can go wrong with a different choice of $i_0$, see \Cref{fig:xor}.
			
			As for distinguishability, all states have a path to an accepting state,
			so it suffices to show that states that are the same
			distance from the start state are $M$-distinguishable.
			Recall that the set of states of $M_f$ is
			\[
				Q = \{q_w: w\in [b]^i, i\le i_0\} \cup
				\{r_{g}: g\in\mathfrak C^-_i, i>i_0\}
			\]
			For two states $q_v$, $q_w$ where $\abs{v}=\abs{w}$, it suffices to consider the case $\abs{w}=i_0$.
			Then $q_v$ and $q_w$ are $M$-distinguishable precisely because we chose $i_0$ and $f$ so that each extension by adding one more symbol to $v$, $w$
			does not give the same set of possible extensions, i.e., precisely to distinguish $v$ and $w$.
			Similarly $r_g$ and $r_h$ for $g,h\in\mathfrak C^-_i, i>i_0$ have the sets of possible extensions given by $g,h$ and therefore are $M$-distinguishable.
		\end{proof}
		\begin{figure}[h]
			\[
				\xymatrix{
						&\top_1\ar[dr]^{0,1}	&\\
					\mathrm{XOR}\ar[r]_{0}\ar[dr]_{1} & p\ar[r]_{1}		& \top_0\\
						&\neg p\ar[ur]_{0}	&\\
				}
			\]
			\caption{An unreachable state $\top_1$ in the automaton $M_{\mathrm{XOR},1}$ (\Cref{def:M}).}\label{fig:xor}
		\end{figure}

	\section{The number of maximally complex languages}\label{sec:number}
		A \emph{$k$-set} is a set of cardinality $k$.
		For a function $f:A\to B$ we denote the range and domain by $\ran(f)=\{f(x)\mid x\in A\}$ and $\dom(f)=A$, respectively.
		The collection of all subsets of $A$ of cardinality $k$ is denoted $\binom{A}k$.
		\begin{lemma}\label{lem:the-number}
			Let $k,b,v,i$ be positive integers with $i<v$.
			Let $Z:b\to v$ be the constant function defined by $Z(b')=v-1$ for all $b'\in [b]$.
			The number of $k$-sets $X\subseteq [v]^{[b]}\setminus\{Z\}$ such that $\bigcup_{f\in X}\ran(f)\supseteq [i]$ is
			\begin{equation}\label{eq:the-number}
				\binom{    v^b  -1}k -
				\sum_{j=1}^i
				(-1)^{j+1}
				\binom{i}{j}
					\binom{(v-j)^b-1}k.
			\end{equation}
		\end{lemma}
		\begin{proof}
			There are $v^b-1$ elements of $[v]^{[b]}\setminus\{Z\}$ and hence $\binom{v^b-1}k$ total $k$-sets.

			Since $i<v$, $v-1\not\in [i]=\{0,\dots,i-1\}$. Thus the range of $Z$ is disjoint from $[i]$.
			
			Given $J\subseteq [i]$, $\car{J}=j$,
			it follows that $Z\in ([v]\setminus J)^{[b]}$ and so there are $(v-j)^b-1$ functions in $[v]^{[b]}\setminus\{Z\}$ whose range is disjoint from $J$,
			i.e.,
			\[
				\car{\left([v]^{[b]}\setminus\{Z\}\right) \cap ([v]\setminus J)^{[b]}}=\car{([v]\setminus J)^{[b]}\setminus\{Z\}}=(v-j)^b-1.
			\]
			Here $([v]\setminus J)^{[b]}\subseteq [v]^{[b]}$.
			
			For the union of ranges to not contain $[i]$ means that there is some $i'\in [i]$ that is missed.
			The number of $k$-sets that miss some $i'$ is then given by inclusion-exclusion in terms of $j$,
			the cardinality of a set $J\subseteq [i]$ that is disjoint from $\bigcup_{f\in X}\ran(f)$.
			Thus the number of $k$-sets with $\bigcup_{f\in X}\ran(f)\not\supseteq [i]$ is
			\[
				\sum_{j=1}^i
				(-1)^{j+1}
				\binom{i}{j}
					\binom{(v-j)^b-1}k.
				\qedhere
			\]
		\end{proof}
		For fixed $b$ and $c$, let $\mathcal B_d$ ($\mathcal B_d^+$) be the set of all (not constant zero) functions from $[b]^d$ to $[c]$.
		\begin{definition}\label{def:chi}
		For a function $f:[b]^n\to [c]$ and $0\le j\le n$, define a function
		\(
			\varphi_{f,j}: [b]^j \to [c]^{[b]^{n-j}}
		\)
		by
		\(
			\varphi_{f,j}(w) = f\circ\tau_{w}
		\) for all $w$.
		\end{definition}
		Note that $\varphi_{f,j}$ is the function $\phi$ in the proof of \Cref{thm:lb}.
		\begin{definition}\label{def:adequate}
			For each $0\le j\le n$, let $Z_j:[b]^{n-j}\to [c]$ be the constant zero function.
			A set $X\subseteq [c]^{[b]^{n-j}}\setminus\{Z_j\}$ is \emph{$j$-adequate} if
			\[
				\{f\circ\tau_{a}\mid f\in X, a\in [b]\}\supseteq [c]^{[b]^{n-(j+1)}}\setminus\{Z_{j+1}\}.
			\]
			A function $\varphi: [b]^j \to [c]^{[b]^{n-j}}$ is called \emph{$j$-adequate} if
			its range is a $j$-adequate $b^j$-set, i.e.:
				\begin{enumerate}
					\item $\varphi(w)\ne Z_{j}$ for each $w$,
					\item $\varphi$ is injective, and 
					\item 
					\(
						\{(\varphi(w))\circ\tau_{a}\mid w\in [b]^{j}, a\in [b]\}\supseteq [c]^{[b]^{n-(j+1)}}\setminus\{Z_{j+1}\}.
					\)
				\end{enumerate}
			We say that $\varphi$ is \emph{adequate} if it is $j$-adequate for $j=\chi(b,c,n)$.
		\end{definition}
		\begin{proposition}\label{prop:j-adequate}
			If $\varphi$ is $j$-adequate then $b^j\le c^{b^{n-j}}-1$ and $b^{j+1}\ge c^{b^{n-(j+1)}}-1$.
		\end{proposition}
		The proof of \Cref{prop:j-adequate} is immediate. It follows that $\varphi$ can only be $j$-adequate if $j=\chi(b,c,n)$,
		unless we happen to have $b^{j+1}=c^{b^{n-(j+1)}}-1$.
		\begin{proposition}\label{prop:determine}
			For all $j$, we have $f=g\iff \varphi_{f,j}=\varphi_{g,j}$.
		\end{proposition}
		\begin{proof}
			$\implies$ is immediate. Conversely, suppose $\varphi_{f,j}=\varphi_{g,j}$. Fix $x$ and write $x=x_1x_2$, $\abs{x_1}=j$. Then 
			\[
				f(x)=\varphi_{f,j}(x_1)(x_2)=\varphi_{g,j}(x_1)(x_2)=g(x).\qedhere
			\]
		\end{proof}
		\begin{definition}\label{def:M}
		For each $f$ and $j$ we defined the associated automaton $M_{f,j}$ in \Cref{3pm-121721}.
		Let $M^-_{f,j}$ be $M_{f,j}$ with unreachable states removed and indistinguishable states merged.
		Let $Q^-_{f,j}$ be the set of states of $M^-_{f,j}$.
		\end{definition}
		\begin{theorem}\label{thm:jun19 2021}
			The following are equivalent:
			\begin{enumerate}
				\item $\varphi_{f,\chi(b,c,n)}$ is adequate.
				\item $\#(Q^-_{f,\chi(b,c,n)})=\maxsc_{b,c,n}$; all states of $Q^-_{f,\chi(b,c,n)}$ are reachable and distinguishable; and $M^-_{f,\chi(b,c,n)}$ accepts $f$.
				\item It is not the case that:
				\(
				\#(Q^-_{f,\chi(b,c,n)})<\maxsc_{b,c,n},
				\)
				and $M^-_{f,\chi(b,c,n)}$ accepts $f$.
			\end{enumerate}
		\end{theorem}
		\begin{proof}
			(2) $\implies$ (1): If $\varphi_f$ is not adequate then by definition some states of $M_{f,\chi(b,c,n)}$ are not reachable.

			(1) $\implies$ (2): \Cref{thm:lb}.

			(2) $\implies$ (3) is immediate.

			(3) $\implies$ (2): Assume (3). Since $M^-_{f,\chi(b,c,n)}$ always accept $f$, it follows that it has $\ge\maxsc_{b,c,n}$ states.
			By \Cref{thm:ub} it has exactly $\maxsc_{b,c,n}$ states.
		\end{proof}
		\begin{theorem}\label{prop:adequate}
			The following are equivalent:
			\begin{enumerate}
				\item $\varphi_{f,\chi(b,c,n)}$ is adequate.
				\item $\sc(f)=\maxsc_{b,c,n}$.
			\end{enumerate}
		\end{theorem}
		\begin{proof}
			(1) $\implies$ (2): by (1) $\implies$ (2) of \Cref{thm:jun19 2021} and then by \Cref{thm:blue}.
			
			(2) $\implies$ (1): Suppose $\neg$(1). Then $\neg$(1) in \Cref{thm:jun19 2021}. Therefore $\neg$(3) in \Cref{thm:jun19 2021},
			and so $\sc(f)<\maxsc_{b,c,n}$.
		\end{proof}

			\newcommand{\nmcformula}{				k!\mleft(
				\binom{c^{b^{d+1}}-1}k
				-
				\sum_{j=1}^i
				(-1)^{j+1}
				\binom{i}{j}
					\binom{(c^{b^{d}}-j)^{b}-1}k
				\mright)
}
		\begin{proposition}\label{prop:binomial}
			Let $b,c,n$ be given, $i_0=\chi(b,c,n)$, $i=c^{b^{n-(i_0+1)}}-1$, and $k=b^{i_0}$.
			The number of adequate functions $\varphi:[b]^{i_0}\to [c]^{[b]^{n-i_0}}$ is
			\begin{equation}\label{eq:nmcf}
				\nmcformula.
			\end{equation}
		\end{proposition}
		\begin{proof}
			If $\alpha_i$ is the number of adequate sets then
			the number of adequate functions $\varphi$ is $k!\,\alpha_i$. 

			The map $\varphi$ maps to functions whose union of ranges covers the next set of functions as in \Cref{lem:the-number},
			$k$-sets $X=\{f_1,\dots,f_k\}\subseteq [v]^{[b]}\setminus\{Z\}$ such that $\bigcup_{f\in X}\ran(f)\supseteq [i]$ where
			$i=c^{b^{n-(i_0+1)}}-1$.

			Let $Z_0:[b]^{n-i_0}\to [c]$ be the constant zero function.
			Let $Z(a)=c^{b^{n-(i_0+1)}}-1$ for all $a$.
			Let
			\[
				\beta: [c]^{[b]^{n-i_0}} \to {[c^{b^{n-(i_0+1)}}]}^{[b]}
			\]
			be an arbitrary bijection for which $\beta(Z_0)=Z$.
			By \Cref{lem:the-number}, applying $\beta$,
			and with $v=c^{b^d}$,
			\[
				\alpha_i =
				\binom{c^{b^{d+1}}-1}k
				-
				\sum_{j=1}^i
				(-1)^{j+1}
				\binom{i}{j}
					\binom{(c^{b^{d}}-j)^{b}-1}k.
			\]
			Thus, the number of maps $\varphi$ is
			\begin{eqnarray*}
				&& k!\mleft(
				\binom{c^{b^{d+1}}-1}k
				-
				\sum_{j=1}^i
				(-1)^{j+1}
				\binom{i}{j}
					\binom{(c^{b^{d}}-j)^{b}-1}k
				\mright).\qedhere
			\end{eqnarray*}
		\end{proof}
		\begin{theorem}\label{thm:jan16-2021}
			Let integers $b,c\ge 2$ and $n\ge 1$ be given.
			Let $k=b^{j_0}$ where $j_0=\chi(b,c,n)$.
			Let $d+1=n-j_0$ and $i=c^{b^{d}}-1$.
			Then $\car{\{f\mid \sc(f)=\maxsc_{b,c,n}\}}$ is given by~\eqref{eq:nmcf} and equals
			\[
			\nmcformula.
			\]
		\end{theorem}
		\begin{proof}
			By \Cref{prop:determine},
			\[
				\#\{f:\sc(f)=\maxsc_{b,c,n}\}=\#\{\varphi_f:\sc(f)=\maxsc_{b,c,n}\}.
			\]
			By \Cref{prop:adequate} this equals $\#\{\varphi_f: f\text{ is adequate}\}$, which
			by \Cref{prop:binomial} equals~\eqref{eq:nmcf}.
		\end{proof}

		\begin{example}\label{January 14 2021}
			For $n=4$ and $b=c=2$, we have $\chi(b,c,n)=2$, as illustrated in the following table:

			\begin{center}
				\begin{tabular}{r| r r r r r}
					$i$ &	0&	1&	2&	3&	4\\
					\hline
					$2^i$&	1&	2&	4&	8&	16\\
					$\car{\mathcal B_{n-i}^+}=2^{2^{n-i}}-1$ &65535& 255&15&3&1\\
					\hline
					$\min(2^i,2^{2^{n-i}}-1)$ & 1& 2&	4&	3&1
				\end{tabular}
			\end{center}

			A maximal complexity function is determined by an injective function $\phi$ from $[2]^2$ to $\mathcal B_2^+$,
			such that $\bigcup \{\ran\phi(w)\mid w\in [2]^2\}\supseteq\mathcal B_1^+$.
			Associating each $\phi$ with the set $X_{\phi}=\{\phi(w)\mid w\in [2]^2\}\in\binom{\mathcal B_2^+}4$,
			we see that the number of functions $\phi$ is $4!$ times the number of four-element subsets $X$ of $\mathcal B_2^+$
			for which $\bigcup\{\ran f\mid f\in X\}\supseteq\mathcal B_1^+$.
			By \Cref{lem:the-number} that number is 1155: let $b=2$, $k=4$, $i=3$, and $v=2^2=4$ and calculate that~\eqref{eq:the-number} is $\binom{15}4-3\binom{8}4=1155$.
			Thus the total number of maximum complexity functions is $1155\cdot 24 = 27720$.
		\end{example}

 		\section{Asymptotics}\label{sub:sur}
		In this section we demonstrate (\Cref{thm:sur}) that while most functions do not have maximum complexity, the growth rate of the number of maximally complex functions
		is similar to that of the total number of function $f:[b]^n\to [c]$ for $b=c=2$.
		\begin{proposition}\label{thm:jun14-2021}
			Suppose $i\le v$ and $k$ are positive integers, and $A$ is a set.
			Suppose $(k-1)\car{A}<i$. Then we have
				\begin{eqnarray}
					&&\car{\{\varphi:[k]\to [v]^{A}\mid \bigcup_{t\in [k]} \ran(\varphi(t)) \supseteq [i] \}}\label{eq:not1:1} \\
					&=&
					k! \cdot \car{\{X\in \binom{[v]^{A}}k \mid \bigcup_{f\in X} \ran(f) \supseteq [i] \}}.\label{eq:adequate}
				\end{eqnarray}
				\item
				Suppose that additionally $i<v$, and $Z:A\to [v]$ is a constant function with $Z(a)=z\not\in [i]$ for all $a\in A$.
				Then~\eqref{eq:not1:1} also equals
				\begin{eqnarray}
					k! \cdot \car{\left\{X\in \binom{[v]^{A}}k \mid Z\not\in X, \bigcup_{f\in X} \ran(f) \supseteq [i] \right\}}.\label{eq:adequate2}
				\end{eqnarray}
		\end{proposition}
		\begin{proof}
			\eqref{eq:not1:1}=\eqref{eq:adequate}:
			Let $\varphi$ be given and let $X=\ran(\varphi)$. It suffices to show that $\car{X}=k$.
			Since
			\[
				\car{\ran(f)}\le\car{\dom(f)}=\car{A}
			\] for each $f\in X$, we have
			\[
				i \le \car{\bigcup_{t\in [k]} \ran(\varphi(t))}
				    = \car{
					\bigcup \{
						\ran(f)\mid f\in X
					\}
					}\le \car{X}\car{A}.
			\]
			If $\car{X}\ne k$ then $\car{X}\le k-1$, and we have the contradiction
			\begin{equation}\label{eq:contra}
				i\le \car{X}\car{A}\le (k-1)\car{A} < i.
			\end{equation}
			\eqref{eq:not1:1}=\eqref{eq:adequate2}:
			When $Z$ is constant equal to a value not in $[i]$, $Z\not\in\ran(\varphi)$ follows from the other condition:
			if $Z\in\ran(\varphi)$ then let $X=\ran(\varphi)\setminus\{Z\}$. Then $\car{X}=k-1$
			and we get a contradiction as in~\eqref{eq:contra}.
		\end{proof}

		\begin{definition}\label{forCatalan}
			Let $b$ and $c$ be positive integers and let $0\le t\le n$.
			Let $O_t=O_t^{(b,c,n)}$ be the number of functions from $[b^{t}]$ to $[c^{b^{n-t}}]$ that are onto $[c^{b^{n-t}}-1]$:
			\[
				\forall y\in [c^{b^{n-t}}-1]\quad\exists x\in [b^t]\quad f(x)=y.
			\]
		\end{definition}
		\begin{theorem}\label{thm:sur}
			Let $b$ and $c$ be positive integers and let $n\ge 0$.
			Let $j_0=\chi(b,c,n)$.
			If the condition
			\begin{equation}\label{eq:sur-condition}
			b\cdot (b^{j_0}-1) < c^{b^{n-(j_0+1)}}-1
			\end{equation}
			holds, then $\car{\{f:[b]^n\to [c]\mid \sc(f)=\maxsc_{b,c,n}\}}=O_t$, where $0\le t\le n$ is minimal such that $O_t>0$.
		\end{theorem}
		\begin{proof}
			By \Cref{prop:adequate},
			\[
				\car{\{f\mid \sc(f)=\maxsc_{b,c,n}\}}=\car{\{f\mid \varphi_{f,\chi(b,c,n)}\text{ is adequate}\}}.
			\]
			Let $Z_0:[b]^{n-j_0}\to [c]$ be the constant zero function.
			Let $Z(a)=c^{b^{n-(j_0+1)}}-1$ for all $a$.
			Let
			\[
				\beta: [c]^{[b]^{n-j_0}} \to {[c^{b^{n-(j_0+1)}}]}^{[b]}
			\]
			be an arbitrary bijection for which $\beta(Z_0)=Z$.

			Given $\varphi$ define $\psi$ by $\psi(x,b')=\varphi(x)\circ\tau_{b'}$.
			The following are equivalent:
			\begin{itemize}
				\item $\bigcup_{x\in [k]}\ran(\varphi(x))\supseteq [i]$;
				\item $\psi$ is onto $[i]$.
			\end{itemize}
			Thus $O_t$ is equal to~\eqref{eq:not1:1}, where $t=\chi(b,c,n)+1$.
			By \Cref{thm:jun14-2021}
			under the bijection $\beta$, with
			$k=b^{j_0}$,
			$A=[b]$,
			$i=c^{b^{n-j_0}}-1$, and
			$v=c^{b^{n-j_0}}$,
			$O_t$ is moreover equal to~\eqref{eq:adequate2}, as desired.
		\end{proof}

		\begin{remark}
			The authors regret that in~\cite{MR3943984}, the condition~\eqref{eq:sur-condition} in \Cref{thm:sur} was erroneously omitted.
			By definition $b^{j_0+1}>c^{b^{n-(j_0+1)}}-1$, so $b^{j_0+1}\ge c^{b^{n-(j_0+1)}}$, but the condition fails when $b^{j_0+1}\ge c^{b^{n-(j_0+1)}}+b-1$.
		\end{remark}
		\begin{example}
			Consider the case $n=1$, $b=c=2$ of \Cref{thm:sur}. Then $i_0=1$, where $i_0$ is the least $i$ such that $O_i>0$.
			$O_i^{(2,2,1)}$ is the number of functions from $[2^i]$ to $[2^{2^{1-i}}]$ that are onto $[2^{2^{1-i}}-1]$.
			For $i=0$, there are no such functions.
			For $i=1$, there are three such functions.
			And indeed, this is the number of maximal complexity functions in this case: the functions $f:\{0,1\}\to \{0,1\}$ that are onto $\{1\}$.
		\end{example}
		\begin{definition}
			Let $O_{m,n}$ be the number of onto functions from $[m]$ to $[n]$.
			Stirling numbers of the second kind are denoted ${m\brace n}$ and equal the number of equivalence relations on $[m]$ with $n$ equivalence classes.
		\end{definition}
		The following result is well known.
		\begin{lemma}\label{lem:wellKnown}
			Let $m,n$ be positive integers. Then
			\(
				O_{m,n} = n!{m\brace n}.
			\)
		\end{lemma}

		\begin{lemma}\label{lem:onto-some}
			Let $u$ and $v$ be positive integers.
			The number of functions from $[u]$ to $[v]$ that are onto the first $v-1$ elements of $[v]$ is
			\[
				(v-1)!\sum_{m=0}^{u-(v-1)} \binom{u}{m} {u-m\brace v-1}.
			\]
			The number of functions from $[u]$ to $[v]$ that are onto $[i]$ is
			\[
				i!\sum_{m=0}^{u-i} \binom{u}{m} {u-m\brace i} (v-i)^m.
			\]
		\end{lemma}
		\begin{proof}
			Let $m$ be the number of elements going to $[v]\setminus [i]$.
			Then we see that the number of such functions is
			\[
				\sum_{m=0}^{u-i} \binom{u}{m} O_{u-m,i} (v-i)^{m}= \sum_{m=0}^{u-i} \binom{u}{m} i!{u-m\brace i} (v-i)^m
			\]
			by \Cref{lem:wellKnown}.
		\end{proof}
		\begin{lemma}\label{lem:factorial}
			Let $u$ be a positive integer.
			The number of functions from $[u]$ to $[u]$ that are onto $[u-1]$ is $(u+1)!/2$.
		\end{lemma}
		\begin{proof}
			Note that for any $m$, ${m\brace m}=1$ and ${m\brace m-1}={m\choose 2}$.
			By \Cref{lem:onto-some}, the number of such functions is
		\begin{eqnarray*}
			&& (u-1)!\sum_{m=0}^{1} \binom{u}{m} {u-m\brace u-1}\\
			&=& (u-1)!\mleft(\binom{u}{0} {u\brace u-1} + \binom{u}{1} {u-1\brace u-1}\mright)\\
			&=& (u-1)!\mleft( {u\brace u-1} + u {u-1\brace u-1}\mright)\\
			&=& (u-1)!\mleft( {u\choose 2} + u \mright)
			= \frac{(u+1)!}2.\qedhere
		\end{eqnarray*}
		\end{proof}
		The following \Cref{lem:gamma} will only be applied in the case $\gamma=0$.
		\begin{lemma}\label{lem:gamma}
			Let $j$ be a nonnegative integer, let $p\ge 2$, and let $0\le\gamma\le j$ be an integer.
			Let $n=p^j+j-\gamma$ and $b=p$, $c=p^{p^\gamma}$.
			$O_i:=O^{b,c,n}_i$, where $i$ is minimal such that $O_i>0$, equals
			\[
				 \frac{(p^{p^j}+1)!}2.
			\]
		\end{lemma}
		\begin{proof}
			The condition that $O_i^{(b,c,n)}>0$ for some $i$, i.e., $b^i\ge c^{b^{n-i}}-1$ for some $0\le i\le n$,
			i.e., $p^i\ge p^{p^\gamma p^{n-i}}-1$, i.e., $p^n\ge p^{p^\gamma}-1$, i.e., either $n\ge p^\gamma$ (i.e., $\gamma\le j$) or $n=\gamma=0, p=2$,
			follows from $\gamma\le j$.

			By \Cref{lem:onto-some}, with $u=b^i, v=c^{b^{n-i}}$,
			\begin{eqnarray*}
				O^{(b,c,n)}_i = (v-1)!\sum_{m=0}^{u-(v-1)} \binom{u}{m} {u-m\brace v-1}.
			\end{eqnarray*}

			The condition $O_i>0$ is equivalent to $b^i\ge c^{b^{n-i}}-1$.
			When $b=c=p$ and $i>0$, this is equivalent to
			\begin{equation}\label{eq:gamma}
				i\ge p^\gamma p^{n-i}
			\end{equation}
			Let $k=p^{j}$.
			Since by assumption $n=k+j-\gamma$, \eqref{eq:gamma} becomes
			\[
				ip^{i} \ge p^\gamma p^n = kp^k
			\]
			Since the map $i\mapsto i p^i$ is increasing, the requirement for $i$ is that $i\ge k$.
			Note that setting $i=k$ now makes $u=v$.
			Therefore by \Cref{lem:factorial}, $O_i$ is $(u+1)!/2=(p^i+1)!/2$ as desired.
		\end{proof}

		\begin{lemma}\label{lem:the-number-sur}
			Let $j$ be a nonnegative integer.
			Let $n=p^j+j$ and $b=c=p \ge 2$.
			Then
			\[
				\car{\{f:[b]^n\to [c]\mid \sc(f)=\maxsc_{b,c,n}\}}= \frac{(p^{p^j}+1)!}2.
			\]
		\end{lemma}
		\begin{proof}
			We have
			$p^{n-j}=p^{p^j}$ and hence $p^{m}=p^{p^{n-m}}$ for $m=n-j$,
			so that $p^m > p^{p^{n-m}}-1$ but $p(p^{m-1}-1)< p^{p^{n-m}}-1$.
			Thus \Cref{thm:sur} applies and the number of such functions is
			$O_i:=O^{b,c,n}_i$, where $i$ is minimal such that $O_i>0$.
			By \Cref{lem:gamma} with $\gamma=0$ we are done.
		\end{proof}
		Using \Cref{thm:jan16-2021} for $b=c=2$ we calculate some values for
		\[
			\mathrm{nmcf}(n) = \#\{f:[2]^n\to [2]\mid \sc(f)=\maxsc_{2,2,n}\},
		\]
		the number of maximally complex functions from $[2]^n$ to $[2]$, in \Cref{tab:nmcf}.
		In \Cref{thm:nmcf} we shall study the limiting behavior suggested by \Cref{tab:nmcf}.
		\begin{table}
			\centering
			\begin{tabular}{r r r}
				\toprule
				$n$ & $\mathrm{nmcf}(n)$ & $\frac1n\log_2\log_2(\mathrm{nmcf}(n))$\\
				\midrule
				0 & 1 & \\
				$1$ &                   $3$& 0.664\\
				$2$ &                   $6$& 0.685\\
				$3$ &                  $60$& 0.854\\
				$4$ &               $27720$& 0.971\\
				$5$ &           $259338240$& 0.961\\
				$6$ &     $177843714048000$& 0.927\\
				\bottomrule
			\end{tabular}
			\caption{The number of maximally complex functions from $[2]^n$ to $[2]$ for $n\le 6$.}\label{tab:nmcf}
		\end{table}
		\begin{theorem}\label{thm:nmcf}
			The number of maximal complexity functions satisfies
			\[
				\limsup_{n\to\infty}\frac1n\log_2\log_2(\mathrm{nmcf}(n))=1.
			\]
		\end{theorem}
		\begin{proof}
			It is immediate that
			\(
				\limsup_{n\to\infty}\frac1n\log_2\log_2(\mathrm{nmcf}(n))\le 1
			\).
			For the other direction, consider the case where $n=2^j+j$ for some $j$.
			By Stirling's approximation,
			\begin{eqnarray*}
				&&\log_2 (2^{2^j}!) = 2^{2^j}\log_2 2^{2^j} - 2^{2^j} \log_2 e +O(\log_2 2^{2^j})\\
				&=& 2^{2^j} 2^j - 2^{2^j} \log_2 e +O(2^j)\\
				&=& 2^n - 2^{n-j} \log_2 e +O(2^j)
			\end{eqnarray*}
			and hence
			\(
				\lim_{n\to\infty}\frac1n\log_2\log_2 (2^{2^{j}}!)=1
			\).
			By \Cref{lem:the-number-sur},
			\[
				\limsup_{n\to\infty}\frac1n\log_2\log_2(\mathrm{nmcf}(n))\ge 1.\qedhere
			\]
		\end{proof}

		In \Cref{lem:the-number-sur}, $(p^{p^j}+1)!/2$ may seem like a large number but it is relatively small:
		in terms of $w:=p^k$,
		\begin{eqnarray*}
			\frac{(p^{p^j}+1)!/2}{c^{b^n}}&=&
			\frac{(p^{p^j}+1)!/2}{p^{p^\gamma p^n}}=\\
			\frac{(p^{p^j}+1)!/2}{p^{p^{p^j+j}}} &=& \frac{(p^k+1)!/2}{p^{p^{k+j}}} = \frac{(p^k+1)!/2}{p^{k p^k}} = \frac{(w+1)!/2}{w^w}\to 0.
		\end{eqnarray*}
		\begin{example}
		For $n=6$ and $b=c=2$, then, we get
		$i=k=2^j=4$, and
		\[
			O_4^{2,2,6} = \frac{17!}2
		\]
		So there are more than 177 trillion maximum-complexity 6-ary Boolean functions, which is however a small fraction of the total number of such functions,
		\[
			2^{2^6} = 18,446,744,073,709,551,616
		\]
		or over 18 quintillion.
		\end{example}
		\begin{figure}\label{tab5}
			\begin{align*}
				&\{000,001,010,101\}, \{000,001,010,111\}, \{000,001,011,100\},\\
				&\{000,001,100,111\}, \{000,001,011,110\}, \{000,001,101,110\},\\
				&\{000,010,011,101\}, \{000,010,011,111\}, \{001,010,011,100\},\\
				&\{010,011,100,111\}, \{001,010,011,110\}, \{010,011,101,110\},\\
				&\{000,011,100,101\}, \{000,100,101,111\}, \{001,010,100,101\},\\
				&\{010,100,101,111\}, \{001,100,101,110\}, \{011,100,101,110\},\\
				&\{000,011,110,111\}, \{000,101,110,111\}, \{001,010,110,111\},\\
				&\{010,101,110,111\}, \{001,100,110,111\}, \{011,100,110,111\},\\
				&\{000,001,010,100,111\},\{000,001,010,101,110\},\{000,001,011,100,110\},\\
				&\{000,001,011,101,110\},\{000,001,011,100,111\},\{000,001,010,101,111\},\\
				&\{000,010,011,100,111\},\{000,010,011,101,110\},\{001,010,011,100,110\},\\
				&\{001,010,011,101,110\},\{001,010,011,100,111\},\{000,010,011,101,111\},\\
				&\{000,010,100,101,111\},\{000,011,100,101,110\},\{001,010,100,101,110\},\\
				&\{001,011,100,101,110\},\{001,010,100,101,111\},\{000,011,100,101,111\},\\
				&\{000,010,101,110,111\},\{000,011,100,110,111\},\{001,010,100,110,111\},\\
				&\{001,011,100,110,111\},\{001,010,101,110,111\},\{000,011,101,110,111\},\\
				&\{000,001,010,011,100,111\}, \{000,001,010,011,101,110\},\\
				&\{000,001,010,100,101,111\}, \{000,001,011,100,101,110\},\\
				&\{000,001,010,101,110,111\}, \{000,001,011,100,110,111\},\\
				&\{000,010,011,100,101,111\}, \{001,010,011,100,101,110\},\\
				&\{000,010,011,101,110,111\}, \{001,010,011,100,110,111\},\\
				&\{000,011,100,101,110,111\}, \{001,010,100,101,110,111\}.
			\end{align*}
			\caption{
				The $(2^{2^1}+1)!/2=60$ languages $Z=\{x\mid f(x)=1\}$ with maximal complexity, 7,
				for $n=3=2^1+1$ and $b=c=2$.}
		\end{figure}

		\begin{remark}
			For future work, it would be interesting (but difficult) to determine the distribution of $\sc(f)$ over $f\in [c]^{[b]^n}$.
		\end{remark}
		\newpage
		\bibliographystyle{plain}
		\bibliography{kh-liu}
\end{document}